\title{Incomplete decode-and-forward protocol using distributed space-time block codes}
\author{
\authorblockN{Charlotte Hucher$^1$, Ghaya Rekaya-Ben Othman$^1$ and Ahmed Saadani$^2$\\ }
\authorblockA{$^1$ Ecole Nationale Superieure des Telecommunications, Paris\\ $^2$ France Telecom Research\&Developpement \\ Email: \{hucher,rekaya\}@enst.fr, ahmed.saadani@orange-ftgroup.com}
}
\begin{document}

\maketitle

\begin{abstract}
In this work, we explore the introduction of distributed space-time codes in decode-and-forward (DF) protocols. A first protocol named the Asymmetric DF is presented. It is based on two phases of different lengths, defined so that signals can be fully decoded at relays. This strategy brings full diversity but the symbol rate is not optimal. To solve this problem a second protocol named the Incomplete DF is defined. It is based on an incomplete decoding at the relays reducing the length of the first phase. This last strategy brings both full diversity and full symbol rate.
The outage probability and the simulation results show that the Incomplete DF has better performance than any existing DF protocol and than the non-orthogonal amplify-and-forward (NAF) strategy using the same space-time codes. Moreover the diversity-multiplexing gain tradeoff (DMT) of this new DF protocol is proven to be the same as the one of the NAF.
\end{abstract}

\begin{keywords}
cooperative diversity, relay channel, decode-and-forward (DF), space-time block codes (STBC)
\end{keywords}

\section{Introduction}

Diversity techniques have been developed in order to combat fading on wireless channels. Recently, a new diversity technique has been proposed with cooperative systems~\cite{coopdiv1},~\cite{coopdiv2}. Different nodes in the network cooperate in order to form a MIMO system array and exploit space-time diversity. Cooperation protocols have been classified in three main families: amplify-and-forward~(AF), decode-and-forward~(DF), and compress-and-forward~(CF).

DF protocols require more processing than AF ones, as the signals have to be decoded at relay before being forwarded. However, if signals are correctly decoded at relays, performance are better than those of AF protocols, as noise is deleted.

Moreover, in this paper, our work is motivated by the potential advantages of DF protocols over AF protocols in some scenarios. For example, it has been proven in~\cite{multihop} that in a multihop context it is necessary to use a DF protocol at some relays to regenerate the signals. Indeed a full AF strategy would add more noise at each hop, which makes signals no longer decodable.

There are few proposed DF protocols in literature. They usually do not succeed to bring both full diversity and full symbol rate. The LTW DF (named by its authors Laneman, Tse and  Wornell \cite{ltw}) has a full diversity order but a rate of $\frac{1}{2}$ symbol per channel use (symb. pcu). The NBK DF (named by its authors Nabar, Bolcskei and Kneubuhler \cite{nbk}) has a rate of 1 symb. pcu but no diversity. Indeed, as signals have to be fully decoded at relays, the first phase of the transmission needs 1 channel use for each information symbol. Two different cases can be implemented in the second phase: either the source sends new symbols to have a rate of 1 symb. pcu, but diversity is lost (these new symbols not being relayed); or the source sends the same symbols or a combination of them to have diversity, but the rate drops. The only proposed solution to this problem is the Dynamic DF (DDF) protocol \cite{ddf} which succeeds to bring both full diversity and a rate of 1 symb. pcu. However its implementation is quite complex and an usable DDF was not proposed.

To define a DF protocol with both full rate and full diversity, we suggest to introduce distributed space-time block codes (STBC) in the same way they have been successfully used in AF strategies, and in particular with the non-orthogonal AF (NAF) \cite{ddf,goldennaf}, as well as in the Alamouti DF protocol \cite{isit07}. We first present a DF protocol with asymmetric sending and relaying phases. It brings full diversity, but the rate is only $\frac{2}{3}$ symb. pcu. To solve this problem of low symbol rate, we define an Incomplete DF based on an incomplete decoding at the relays. This protocol brings both full rate and full diversity. Outage probabilities calculations and simulation results have been conducted to validate these approaches and to prove that Incomplete DF has better performance than any existing DF protocol and than the NAF using the same STBC. Moreover a theoretical study shows that the Incomplete DF has the same diversity-multiplexing gain tradeoff (DMT) than the NAF.

\section{System model and notations}

We consider a wireless network with $N+1$ sources and one destination. As the channel is shared in a TDMA manner, each user is allocated a different time slot, and the system can be reduced to a relay channel with one source, $N$ relays and one destination. The $N+1$ sources will play the role of the source in succession, while the others will be used as relays.

The channel links are assumed to be Rayleigh distributed and slow fading, so their coefficients can be considered as constant during the transmission of at least one frame. Besides, we suppose a symmetric scenario, i.e. all the channel links are subject to the same average signal-to-noise ratio (SNR).

As this work focuses on the protocol, for simplicity, a uniform energy distribution is assumed.

Considered terminals are half-duplex; they cannot receive and transmit at the same time. They are equipped with only one antenna; the MIMO case is not considered in this work.

In the next sections, notation given on figure~\ref{model} will be used. The channel coefficient of the link between source S and destination D is $g_0$, the one between source S and relay RS$_n$, $n\in\{1,\dots,N\}$, is $h_n$ and the one between relay RS$_n$ and destination D is $g_n$.

There is no channel state information (CSI) at the source, the destination is supposed to know all the channel coefficients $g_n$, which is necessary for the decoding of the information, and each relay RS$_n$ is assumed to know its corresponding source-relay channel coefficient $h_n$.

In the paper, following notation are used. Boldace lower case letters $\mathbf{v}$ denote vectors. Boldface capital letters $\mathbf{M}$ denote matrices. $\mathbf{M}^\dagger$ denote the transpose conjugate of matrix $\mathbf{M}$. $Pr$ stands for a probability. $\mathbb{R}$, $\mathbb{C}$, $\mathbb{Q}$ and $\mathbb{Z}$ stands for the real, complex, rational and integer field respectively. For each algebraic number field $\mathbb{K}$, the ring of integers is denoted $\mathcal{O}_\mathbb{K}$.

\section{The Asymmetric DF protocol}

The Asymmetric DF is a first approach to the introduction of distributed space-time codes in DF protocols. It is composed of 2 phases of different lengths. During the first phase, the source sends all the information symbols in a non-coded manner, in order for the relays to be able to easily decode them. The space-time codeword is then reconstructed in the second phase.
This protocol is to be associated with a $2N\times2N$ algebraic ST code.

\subsection{Transmission scheme}\label{sect_trans}

Let's consider the $2N\times2N$ algebraic ST code $\mathcal{C}$ which can be either a Threaded Algebraic Space-Time (TAST) code~\cite{tast}, a perfect code \cite{perfectcode} or quasi-perfect codes~\cite{quasiperfectcode}. This families of codes have a codeword which can be written in the following form 
\begin{equation}
\mathbf{X} = \left[ \begin{array}{ccccc}
x_1 & x_2 & \dots & x_{2N-1} & x_{2N} \\
\gamma\sigma(x_{2N}) & \sigma(x_{1}) & \dots & \sigma(x_{2N-2}) & \sigma(x_{2N-1}) \\
\vdots & \vdots & \ddots & \vdots & \vdots \\
\gamma\sigma^{2N-2}(x_{3}) & \gamma\sigma^{2N-2}(x_{4}) & \dots & \sigma^{2N-2}(x_{1}) & \sigma^{2N-2}(x_{2}) \\
\gamma\sigma^{2N-1}(x_{2}) & \gamma\sigma^{2N-1}(x_{3}) & \dots & \gamma\sigma^{2N-1}(x_{2N}) & \sigma^{2N-1}(x_{1})
\end{array} \right]
= \left[ \begin{array}{c} \mathbf{l_1} \\ \mathbf{l_2} \\ \vdots \\ \mathbf{l_{2N-1}} \\\mathbf{l_{2N}} \end{array} \right]
\label{eq:codeword}
\end{equation}
where the $x_k$, $k\in\{1,\dots,2N\}$, are elements of the ring of integers $\mathcal{O}_\mathbb{K}$ of $\mathbb{K}$, a cyclic extension field of $\mathbb{Q}(i)$ of dimension $2^{2N}$ (the $X_k$ are linear combinations of $2N$ information symbols), $\sigma$ is the generator of the Gallois group $\mathbb{K}/\mathbb{Q}(i)$ and $\gamma$ is an element of either $\mathbb{K}$ or $\mathbb{Z}(i)$ used to separate the layers of the codeword. Overall $4N^2$ information symbols are send in the codeword.

Let's call $\mathbf{l_k}$, $k\in\{1,\dots,2N\}$, the lines of the codeword matrix.

The transmission frame for a $N$-relay channel is described in figure~\ref{TRANS_SCHEME}. The transmission of one frame lasts $2N\times2N+2N\times N=6N^2$ channel uses. There are two main phases: during the first one, which lasts $2N\times2N=4N^2$ channel uses, the source sends the $4N^2$ symbols and the $N$ relays listen. During the second phase, which lasts $2N\times N=2N^2$ channel uses, the source sends the $N$ last lines of the codeword, while the $N$ relays send the reconstructed version of the $N$ first lines. Relay $RS_n$, $n\in\{1,\dots,N\}$, sends the recoded version of the $n^{\textrm{th}}$ line $\widetilde{\mathbf{l_n}}$ of the codeword while source sends the $(N+n)^{\textrm{th}}$ line $\mathbf{l_{N+n}}$. The destination keeps listening during the whole transmission. The symbol rate is then $\frac{4N^2}{6N^2}=\frac{2}{3}$ symb. pcu.

\subsection{Selection between the Asymmetric DF protocols and the non-cooperative case}\label{sect_selection}

DF protocols assume that signals are correctly decoded at relays during the first phase of the transmission, which is obviously not always the case. That is why we have to guarantee the first phase of the transmission to be able to use DF protocols. In literature, a selection based on the source-relay links quality was made \cite{ltw2}. The used criterion is the outage probability.

Indeed, according to Shannon theorem, if the link between source and relay $RS_n$, $n\in\{1,\dots,N\}$ is in outage, no detection is possible at this relay without error. In the other case, the source-relay $RS_n$ link is not in outage, detection is possible and we use a DF protocol assuming that no error occurs at relay $RS_n$.

In our case, the outage event of a source-relay $RS_n$ link is defined by
\begin{displaymath}
O = \left\lbrace \log\left(1+\rho|h_n|^2\right) < \frac{3}{2}R \right\rbrace
\end{displaymath}
where $\rho$ defined such that $\mathsf{SNR}(dB)=10\log_{10}(\rho)$ is the signal to noise ratio, $R$ is the global spectral efficiency, and so $\frac{3}{2}R$ is the spectral efficiency of the source-relay $RS_i$ link.

Only relays that can decode correctly the signals (whose source-relay link is not in outage) are selected. If there are $N_u\geq1$ of them selected, a DF protocol with $N_u$ relays is used, and if none of them is good, we use a non-cooperative strategy.

In practice, each relay can determine whether its source-relay link is in outage or not and send this information to the destination. The destination then knows how many relays can be used and so the scheme to be applied. The destination broadcasts this information to the other nodes of the network. This implementation aspect (channel estimation and feedback) will not be detailed any more in this paper as we focus only on the protocol.

\subsection{Performance of the Asymmetric DF protocol from simulation results}

Simulation have been made to compare the performance of the NAF and Asymmetric DF schemes in the one-relay case. Both protocols have been implemented with a distributed Golden code~\cite{goldencode} and decoded with a sphere decoder. A more detailed presentation of the Golden code is made in subsection \ref{sect_ex_golden}.

The NAF is proposed in \cite{ddf}. The protocol is non-orthogonal: the source and the relay transmit in the same time. Implemented with the distributed Golden code, the scheme is the following: the source first sends coded signals $\alpha x_1$ and $\alpha x_2$ (defined in equation (\ref{codeword})) while the relay listens. The source then sends the coded signals $i\sigma(\alpha)\sigma(x_2)$ and $\sigma(\alpha)\sigma(x_1)$ while the relay forwards the received signals.

The Asymmetric DF is implemented in the way described in figure \ref{TRANS_SCHEME}. The source first sends the information symbols $s_1$, $s_2$, $s_3$ and $s_4$ while the relay listens and decodes them. The source then sends the coded signals $i\sigma(\alpha)\sigma(x_2)$ and $\sigma(\alpha)\sigma(x_1)$ while the relay sends the coded signals $\alpha\widetilde{x_1} = \alpha(\widetilde{s_1}+\theta\widetilde{s_2})$ and $\alpha\widetilde{x_2} = \alpha(\widetilde{s_3}+\theta\widetilde{s_4})$ reconstructed from the decoded information symbols.

On figure \ref{wer_as_df} are represented the frame error rates of the SISO, NAF and Asymmetric DF protocols as functions of the SNR, for a spectral efficiency of 4 bits pcu. Even if the Asymmetric DF brings full diversity, there is a significant loss in performance (more than 5~dB) compared to the NAF protocol. This is due to the low rate (only $\frac{2}{3}$ symb. pcu) of the DF. Due to this loss, the Asymmetric DF protocol brings an advantage on the non-cooperative protocol only for SNR greater than 35~dB, which makes it useless in most cases.

\section{Incomplete DF protocol}

In order to solve the problem of low rates of the Asymmetric DF protocol, we define a new protocol named Incomplete DF. To increase the rate, the first phase of the transmission is shorten and the second phase is kept the same.
The Incomplete DF protocol is also designed to be used with a $2N\times2N$ algebraic ST code, with $N$ the number of relays. In the following, the general case is studied and two examples for the 1-relay and 2-relay cases are given.

\subsection{Transmission scheme}\label{sect_trans2}

Let's consider the same $2N\times2N$ algebraic ST code $\mathcal{C}$ to be implemented as in subsection \ref{sect_trans}, which can be for example a perfect code or a TAST code.

For the $N$-relay channel, the transmission frame is defined as described in figure~\ref{TRANS_SCHEME2}. It lasts $2N\times2N=4N^2$ channel uses and is divided in two main phases: during the first one, which lasts $2N\times N=2N^2$ channel uses, the source sends the $N$ first lines of the codeword matrix in succession and the $N$ relays listen. During the second phase, which also lasts $2N\times N=2N^2$ channel uses, the source sends the $N$ last lines of the codeword, while the $N$ relays send the decoded version of the $N$ first lines. Relay $RS_n$, $n\in\{1,\dots,N\}$, sends the decoded version of the $n^{\textrm{th}}$ line $\widetilde{\mathbf{l_n}}$ of the codeword while source sends the $(N+n)^{\textrm{th}}$ line $\mathbf{l_{N+n}}$. The destination keeps listening during the whole transmission. The symbol rate is then $\frac{4N^2}{4N^2}=1$ symb. pcu.

Received signals at destination can be expressed as in a MIMO system:
\begin{displaymath}
\left[ \begin{array}{c}
\mathbf{y_1} \\ \vdots \\ \mathbf{y_N} \\ \mathbf{y_{N+1}} \\ \vdots \\ \mathbf{y_{2N}}
\end{array} \right]
= \sqrt{\rho} \left[ \begin{array}{ccc ccc}
g_0 & \cdots & 0 & 0 & \cdots & 0 \\
\vdots & \ddots & \vdots & \vdots & \ddots & \vdots \\
0 & \cdots & g_0 & 0 & \cdots & 0 \\
\frac{g_1}{\sqrt{2}} & \cdots & 0 & \frac{g_0}{\sqrt{2}} & \cdots & 0 \\
\vdots & \ddots & \vdots & \vdots & \ddots & \vdots \\
0 & \cdots & \frac{g_N}{\sqrt{2}} & 0 & \cdots & \frac{g_0}{\sqrt{2}} \\
\end{array} \right]
\left[ \begin{array}{c}
\mathbf{l_1} \\ \vdots \\ \mathbf{l_N} \\ \mathbf{l_{N+1}} \\ \vdots \\ \mathbf{l_{2N}}
\end{array} \right]
+\left[ \begin{array}{c}
\mathbf{w_1} \\ \vdots \\ \mathbf{w_N} \\ \mathbf{w_{N+1}} \\ \vdots \\ \mathbf{w_{2N}}
\end{array} \right],
\end{displaymath}
where $\forall k \in \{1,\dots,2N\}$
\begin{itemize}
 \item $\mathbf{y_k}$ is the $k^{th}$ array of length $2N$ of the received signals,
 \item $\mathbf{l_k}$ is the $k^{th}$ line of the considered codeword matrix as defined in equation (\ref{eq:codeword}),
 \item $\mathbf{w_k}$ is an array of length $2N$ of AWGN.
\end{itemize}
The factor $\frac{1}{\sqrt{2}}$ in the channel matrix comes from the power normalization during the second transmission phase. As two terminals send in each time slot, they have to share the resources.

Reordering the received signals at destination we obtain the equivalent expression:
\begin{equation}
 \mathbf{Y_{eq}} = \sqrt{\rho} \mathbf{H_{eq}} \mathbf{X_{eq}} + \mathbf{W_{eq}}
\end{equation}
with
\begin{equation}
\mathbf{H_{eq}} = \left[ \begin{array}{cccc}
 \mathbf{H_1} & 0            & \cdots & 0      \\
 0            & \mathbf{H_2} & \cdots & 0      \\
 \vdots       & \vdots       & \ddots & \vdots \\
 0            & 0            & \cdots & \mathbf{H_{N}}
\end{array} \right]
\label{eq:Heq}
\end{equation}
and $\forall n \in \{1,\dots,N\}$
\begin{equation}
 \mathbf{H_n} = \left[ \begin{array}{cc}
 g_0 & 0 \\
 \frac{g_n}{\sqrt{2}} & \frac{g_0}{\sqrt{2}}
\end{array} \right].
\label{eq:Heqi}
\end{equation}

Decoding at destination can be performed by using ML lattice decoders such as a Schnorr-Euchner or a sphere decoder.

\subsection{Partial decoding at the relays}

The challenge of the new transmission scheme is decoding at relays. Indeed, the use of a full decode-and-forward strategy would mean that relays have to decode every information symbol $s_j$, $j\in{1,4N^2}$ of our original constellation from only $2N\times N=2N^2$ received signals.

The idea of the Incomplete DF is to estimate received signals as elements $x_k\in\mathcal{O}_\mathbb{K}$, $k\in\{1,\dots,2N\}$, without stating definitely about the information symbols $s_j$, $j\in\{1,\dots,4N^2\}$. Indeed, the knowledge of the $s_j$ is not necessary at relays, as soon as they know the signals $x_k$ that have to be forwarded. Partial decoding at relays is sufficient.

The partial decoding will be more detailed and explained in the sequel by considering some examples.

\subsection{Selection between the Incomplete DF protocols and the non-cooperative case}\label{sect_selection2}

The same selection strategy as for the Asymmetric DF (described in subsection \ref{sect_selection}) is used. Only the expressions of the outage probabilities of the source-relay links change.

Here the outage probability of a source-relay $RS_n$ link, $n\in\{1,\dots,N\}$, is given by:
\begin{equation}
Pr_{O} = Pr \left\lbrace \log(1+\rho|h_n|^2) < 2R \right\rbrace
\label{outage_event}
\end{equation}
where R is the global spectral efficiency. The spectral efficiency of the source-relay link is twice since the same information is sent in two times less channel uses.

\section{Theoretical study of the Incomplete DF performance}
\subsection{Outage probability}\label{sect_pout2}

Outage probability is given by the formula:
\begin{displaymath}
Pr_{\mathrm{out}} = Pr \left\lbrace C(\mathbf{H}) < R \right\rbrace
\end{displaymath}
with the instantaneous capacity
\begin{displaymath}
C(\mathbf{H}) = \frac{1}{T} \log\det(\mathbf{I}+\rho \mathbf{H} \mathbf{H}^\dagger)
\end{displaymath}
where $T$ is the number of time slots, $\rho$ is the signal-to-noise ratio, $\mathbf{H}$ is the (equivalent) channel matrix of the considered system and $R$ is the spectral efficiency.

\newtheorem{theorem}{Theorem}
\begin{theorem}
The outage probability of the Incomplete DF is
\begin{equation}
Pr_{\mathrm{out}} = \sum_{N_u=0}^N \binom{N}{N_u} Pr_{\mathrm{out},N_u}Pr_{O,N-N_u}.
\label{eq:pout}
\end{equation}
where $N$ is the number of relays in the network, $N_u$ is th number of relays whose source-relay link is not in outage, $Pr_{out,N_u}$ is the outage probability of the DF strategy with $N_u$ decoding relays and $Pr{O,N_{N-N_u}}$ is the probability that the source-relay links of the $N-N_u$ other relays are in outage.
\label{th:pout}
\end{theorem}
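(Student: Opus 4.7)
The plan is to apply the law of total probability, conditioning on the number $N_u$ of source-relay links that are not in outage. First I would introduce, for each $n\in\{1,\dots,N\}$, the event $\mathcal{A}_n=\{\log(1+\rho|h_n|^2)\geq 2R\}$ that the $S\to RS_n$ link is \emph{not} in outage, in the sense of equation (\ref{outage_event}). Because the scenario is symmetric and the fading coefficients $h_1,\dots,h_N$ are i.i.d.\ Rayleigh, the $\mathcal{A}_n$'s are mutually independent and share the common probability $1-Pr_O$.

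Next I would partition the sample space according to the random variable $N_u:=|\{n:\mathcal{A}_n\text{ holds}\}|$, which ranges over $\{0,\dots,N\}$. By the selection rule of subsection \ref{sect_selection2}, conditional on the set of good relays, the system performs the Incomplete DF using exactly those good relays (or reverts to the non-cooperative scheme when $N_u=0$). The law of total probability then yields
\begin{displaymath}
Pr_{\mathrm{out}} \;=\; \sum_{N_u=0}^{N}\; Pr\{\text{outage}\mid N_u \text{ good relays}\}\; Pr\{N_u \text{ good relays}\}.
\end{displaymath}

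The remaining step is to identify the two factors. For a given subset $\mathcal{U}\subset\{1,\dots,N\}$ of size $N_u$, the destination-side outage depends on $g_0$ and on $\{g_n:n\in\mathcal{U}\}$; by the symmetry of the $g$-statistics, this conditional outage probability depends on $\mathcal{U}$ only through its cardinality, and I would denote it $Pr_{\mathrm{out},N_u}$. On the $h$-side, the event ``exactly the relays in $\mathcal{U}$ are good'' has probability $(1-Pr_O)^{N_u} Pr_O^{N-N_u}$, which I would write compactly as $Pr_{O,N-N_u}$ in agreement with the theorem's notation. Summing over the $\binom{N}{N_u}$ choices of $\mathcal{U}$ yields the claimed formula.

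The only point requiring care is the independence between the $h_n$'s and the $g_n$'s: this is what allows the joint conditional probability to factor as $Pr_{\mathrm{out},N_u}\cdot Pr_{O,N-N_u}$ rather than producing a genuinely coupled expression, and it is also what makes $Pr_{\mathrm{out},N_u}$ well-defined independently of the relay-selection event. Both properties are immediate from the standard i.i.d.\ Rayleigh hypothesis on all wireless links stated in Section~2, so no real obstacle arises; the argument is essentially a bookkeeping exercise once the conditioning on $N_u$ is set up correctly.
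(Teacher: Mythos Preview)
Your proposal is correct and follows essentially the same route as the paper: condition on which relays can decode, invoke independence of the source--relay outage events, use symmetry so that the destination-side outage depends only on the cardinality $N_u$, and sum over the $\binom{N}{N_u}$ subsets. The only difference in scope is that the paper's proof additionally derives the explicit closed-form expressions for $Pr_{\mathrm{out},N_u}$ (from the block structure of $\mathbf{H_{eq}}$) and for $Pr_{O,N-N_u}$, which are needed downstream for the DMT computation; for the theorem as stated your argument is sufficient, and your explicit mention of the $h$--$g$ independence is a point the paper leaves implicit.
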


\begin{proof}
See Appendix \ref{proof:pout}.
\end{proof}

We plotted the outage probabilities thanks to Monte Carlo simulations. A relay selection has been added for all coooperative schemes. For example, in the one-relay case, the relay is chosen as the best of 3 reachable relays; in the two-relay case, relays are the two best ones between four reachable relays. In a first step the two best relays are selected, and in a second step, the DF protocol determines which of these two relays can be used (i.e. source-relay link not in outage) and chooses the corresponding strategy (SISO, Incomplete DF with 1 relay or Incomplete DF with 2 relays). The AF protocol always use both relays.

Figure \ref{pout_1relai} represents the outage probabilities of the SISO, NAF and Incomplete DF protocols as functions of the SNR at spectral efficiencies of 2 and 4 bits per channel use in the 1-relay case. We can remark that the new DF brings a slight gain over the NAF. Moreover and more interesting is the fact that due to selection it has good performance at low SNR. Same remarks can be done in the 2-relay case.

\subsection{Diversity-Multiplexing gain Tradeoff (DMT)}\label{sect_dmt2}

The diversity-multiplexing gain tradeoff (DMT) has been introduced in \cite{book_tse} to evaluate the asymptotic performance of space-time codes.

\newtheorem{definition}{Definition}
\begin{definition}
A diversity gain $d^*(r)$ is achieved at a multiplexing gain $r$ if
\begin{displaymath}
\lim_{\rho\rightarrow\infty} \frac{\log Pr_{out}(r\log\rho)}{\log\rho} = - d^*(r).
\end{displaymath}
\end{definition}

\vspace{.5cm}
\begin{theorem}
The DMT of the Incomplete DF is
\begin{equation}
 d^*(r) = (1-r)^+ + N(1-2r)^+,
\end{equation}
where $N$ is the number of relays in the network.
\label{th:dmt}
\end{theorem}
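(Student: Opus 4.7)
The plan is to reduce the DMT to a Laplace-principle minimisation by combining Theorem~\ref{th:pout} with the standard Zheng--Tse technique. Since $\rho^{-a}+\rho^{-b}\doteq\rho^{-\min(a,b)}$ at high SNR, the sum in~(\ref{eq:pout}) is exponentially dominated by its largest term, so
\begin{displaymath}
d^{*}(r)=\min_{0\le N_u\le N}\bigl[d_{N_u}(r)+(N-N_u)\,d_{O}(r)\bigr],
\end{displaymath}
where $d_{N_u}(r)$ is the exponent of $Pr_{\mathrm{out},N_u}$ and $d_{O}(r)$ is that of $Pr_{O}$. The source--relay exponent is immediate: writing $|h_n|^{2}=\rho^{-u}$ with $u$ having bounded density at the origin (Rayleigh fading), the condition (\ref{outage_event}) becomes $u>1-2r$, giving $d_{O}(r)=(1-2r)^{+}$.

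For $d_{N_u}(r)$, I would exploit the block-diagonal form (\ref{eq:Heq})--(\ref{eq:Heqi}), for which a $2\times 2$ determinant calculation yields
\begin{displaymath}
\det(\mathbf{I}+\rho\mathbf{H_n}\mathbf{H_n}^{\dagger})=1+\tfrac{\rho}{2}\bigl(3|g_0|^{2}+|g_n|^{2}\bigr)+\tfrac{\rho^{2}}{2}|g_0|^{4}.
\end{displaymath}
Parametrising $|g_0|^{2}=\rho^{-v_0}$ and $|g_n|^{2}=\rho^{-v_n}$ makes this block exponentially equivalent to $\rho^{\max(0,\,2-2v_0,\,1-v_n)}$. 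Since the equivalent channel occupies $T=2N_u$ vector time slots, the outage event $\{C(\mathbf{H_{eq}})<r\log\rho\}$ collapses to
\begin{displaymath}
\sum_{n=1}^{N_u}\max(0,\,2-2v_0,\,1-v_n)<2N_u\,r,
\end{displaymath}
and $d_{N_u}(r)=\inf\{v_0+\sum_n v_n:v\ge 0,\ \text{outage holds}\}$ by Laplace's principle.

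The main obstacle is solving this piecewise-linear programme, because the active piece depends on whether the direct-link term $2-2v_0$, a relay term $1-v_n$, or the trivial zero dominates in each summand. My plan is to argue by symmetry in $n$ and monotonicity in each $v_n$ that the infimum is attained when all blocks sit on the same piece with the constraint tight. For $r\le 1/2$ this forces $2-2v_0=1-v_n=2r$, i.e.\ $v_0=1-r$ and $v_n=1-2r$, giving $d_{N_u}(r)=(1-r)+N_u(1-2r)$; one then has to rule out configurations in which some $v_n$ falls below $2v_0-1$ (switching the active piece to $1-v_n$), which I would do by checking that the slack gained in the constraint never outpaces the increase in $v_n$. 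For $r>1/2$ the $(1-v_n)^{+}$ terms can be driven to zero at no cost, leaving only $v_0=1-r$, so $d_{N_u}(r)=(1-r)^{+}$. Both regimes fuse into $d_{N_u}(r)=(1-r)^{+}+N_u(1-2r)^{+}$.

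Substituting back, the bracket inside the minimum becomes $(1-r)^{+}+N_u(1-2r)^{+}+(N-N_u)(1-2r)^{+}=(1-r)^{+}+N(1-2r)^{+}$, which is independent of $N_u$, so the minimum equals $(1-r)^{+}+N(1-2r)^{+}$ and the theorem follows. The clean cancellation between the $N_u(1-2r)^{+}$ contribution inside $d_{N_u}(r)$ and the $(N-N_u)(1-2r)^{+}$ factor from the selection step is the structural reason why the adaptive policy of Section~\ref{sect_selection2} is asymptotically free, and also why this DMT coincides with that of the NAF.
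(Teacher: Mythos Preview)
Your proposal is correct and follows essentially the same route as the paper: decompose via Theorem~\ref{th:pout}, compute the SNR exponents of $Pr_{\mathrm{out},N_u}$ and $Pr_{O,N-N_u}$ by the Zheng--Tse exponential-order technique applied to the block-diagonal structure~(\ref{eq:Heq})--(\ref{eq:Heqi}), and observe that the resulting bracket $(1-r)^{+}+N_u(1-2r)^{+}+(N-N_u)(1-2r)^{+}$ is independent of $N_u$. Your treatment of the piecewise-linear infimum is in fact slightly more careful than the paper's, and you correctly take a $\min$ over $N_u$ (the paper writes $\max$, which is a slip that is harmless here precisely because of the $N_u$-independence you point out).
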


\begin{proof}
See Appendix \ref{proof:dmt}.
\end{proof}

On figure \ref{dmt} is represented the DMT for the 2-relay case. One can remark that the DMT of the Incomplete DF protocol is exactly the same as the one of the NAF protocol, outperforming the ones of the LTW and NBK DF protocols. The DMT of the DDF protocol is still better, but Incomplete DF implementation is much easier.

\section{Examples of Incomplete DF implementation and simulation results}
\subsection{1-relay channel with the Golden code}\label{sect_ex_golden}

The Golden code is an algebraic code designed for a $2\times 2$ MIMO system in \cite{goldencode} based on the cyclic division algebra of dimension 2, $\mathcal{A} = (\mathbb{Q}(i,\theta)/\mathbb{Q}(i),\sigma,\gamma)$, where $\theta = \frac{1+\sqrt{5}}{2}$ is the Golden number, $\sigma:\frac{1+\sqrt{5}}{2}\longmapsto \frac{1-\sqrt{5}}{2}$ and $\gamma=i$.

A codeword is given by
\begin{displaymath}
\mathbf{X} = \left[ \begin{array}{cc}
\alpha(s_1+\theta s_2) & \alpha(s_3+\theta s_4) \\
i\sigma(\alpha)(s_3+\sigma(\theta)s_4) & \sigma(\alpha)(s_1+\sigma(\theta)s_2)
\end{array} \right] 
\end{displaymath}
with the $s_j$, $j\in\{1,\dots,4\}$ being the information symbols taken in a QAM constellation and $\alpha = 1+i-i\theta$. The elements of the code matrix are in $\mathcal{O}_\mathbb{K}$ the ring of integers of the number field $mathbb{K}=\mathbb{Q}(i,\theta)$. Let's note them $x_1=s_1+\theta s_2$ and $x_2=s_3+\theta s_4$. The codeword is then:
\begin{equation}
\mathbf{X} = \left[ \begin{array}{cc}
\alpha x_1                 & \alpha x_2                \\
i\sigma(\alpha)\sigma(x_2) & \sigma(\alpha)\sigma(x_1)
\end{array} \right].
\label{codeword}
\end{equation}

We propose to implement this space-time code in a distributed manner using the new Incomplete DF protocol with 1 relay as described in subsection~\ref{sect_trans2}. The transmission frame is described in figure~\ref{GOLDEN_DF}.

Elements $x_1$ and $x_2$ both contain two information symbols. They have to be recovered respectively from the received signals $y^{r}_1$ and $y^{r}_2$. The idea of "incomplete decoding" is to decode $x_1$ and $x_2$ as elements $\mathcal{O}_\mathbb{K}$ without stating definitely on the information symbols.

We consider in this paper two decoding methods:
\begin{itemize}
 \item an exhaustive search,
 \item a diophantine approximation.
\end{itemize}

\paragraph{Exhaustive search}

Let's assume the information symbols $s_j$, $j\in\{1,\dots,4\}$, belong to a constellation $C$ (for example a 4-QAM constellation, see figure \ref{const}). We can define a new constellation $C'$ to which the coded symbols $x_k$, $k\in\{1,\dots,2\}$, belong (see figure \ref{const}), which is a finite subset of $\mathcal{O}_\mathbb{K}$.
An exhaustive search is performed in this new constellation. $\widetilde{x_1}$ (resp. $\widetilde{x_2}$) is obtained by looking for the element $x$ of $C'$ that minimizes the distance between $y^{r}_1$ (resp. $y^{r}_2$) and $\sqrt{\rho}h_1 x$.
\begin{align*}
\widetilde{x_1} & = \arg\min_{x \in C'} \{ |y^{r}_1-\sqrt{\rho}h_1 x|^2 \} \\
\widetilde{x_2} & = \arg\min_{x \in C'} \{ |y^{r}_2-\sqrt{\rho}h_1 x|^2 \}
\end{align*}

The complexity of the exhaustive search grows with the size of the constellation. In the case of an $M$-QAM the complexity of the exhaustive search is of the order $M^2$.
However, decomposing signals in their real and imaginary parts, complexity can be reduced to the order $M$.

Simulations have been run for the one-relay cooperative scheme with the distributed Golden code for spectral efficiencies of 2 and 4 bits per channel use. The same relay selection as for the outage probability has been applied.
Figure \ref{wer_1relai} represents the frame error rates of the SISO, NAF and new DF protocols as functions of the SNR. The good performance for low and high SNR noticed in subsection \ref{sect_pout2} on the outage probability curves are confirmed here by simulation results. Using an exhaustive decoding, we obtain slight asymptotic gains over the NAF protocol. Moreover, we can see (especially for 4 bits pcu) that the proposed DF protocol has better performance at low SNR.

\paragraph{Diophantine approximation}

In order to reduce relay decoding complexity, we propose to use a diophantine approximation of the $x_k$, $k\in\{1,\dots,2\}$. There exist two types of diophantine approximation.

\begin{definition}
A homogeneous diophantine approximation of $\zeta\in\mathbb{R}$ is a fraction $\frac{p}{q}\in\mathbb{Q}$ such that $|\zeta-\frac{p}{q}|$ or $D(p,q)=|q\zeta-p|$ is small.
\end{definition}

\begin{definition}
An inhomogeneous diophantine approximation of $\zeta\in\mathbb{R}$, given $\beta\in\mathbb{R}$, is a fraction $\frac{p}{q}\in\mathbb{Q}$ such that $D(p,q)=|q\zeta-p-\beta|$ is small.
\end{definition}

\begin{definition}
A pair $(p,q)\in\mathbb{N}^2$ is a best diophantine approximation if $\forall(p',q')\neq (p,q)\in\mathbb{N}^2$, we have:
\begin{displaymath}
q' \leq q \Rightarrow D(p',q') \geq D(p,q).
\end{displaymath}
\end{definition}

Cassels' algorithm has been proposed in \cite{cassels} and explained in details in \cite{dioph_approx}. Given $\zeta,\beta\in\mathbb{R}$, this algorithm enumerates all best inhomogeneous approximations. A simple modification of this algorithm provides $(p,q)$ in a finite set $\{1,\dots,Z\}$ that minimizes $D(p,q)$. A change of basis provides $(p,q)$ in a $Z$-PAM. The modified algorithm has a complexity of the order $\sqrt{Z}$.

Diophantine approximation only deals with real numbers. The problem of decoding at the relay has then to be divided into its real and imaginary parts. Let's note $\widetilde{y^{r}_1} = \frac{y^{r}_1}{\sqrt{\rho h_1\alpha}}$. Given $\theta,Re(\widetilde{y^{r}_1})\in\mathbb{R}$, we want to find $(Re(s_1),Re(s_2))\in\sqrt{M}$-PAM such that
\begin{displaymath}
|Re(\widetilde{y^{r}_1})-Re(s_1)-\theta Re(s_2)|
\end{displaymath}
is minimized. To solve this minimization we can use the modified Cassels' algorithm with the parameters $\beta=-(Re(\widetilde{y^{r}_1})+(\sqrt{M}+1)(1+\theta))/2$ and $\zeta=-\theta$. The final algorithm is given in appendix \ref{alg:dioph_approx}.

The same processing is done to decode the imaginary part of the signal. Finally, the decoding complexity is only $\sqrt{\sqrt{M}}=\sqrt[4]{M}$.

If $\theta=e^{i\frac{\pi}{4}}$ the decomposition in real and imaginary part is more complex, but the diophantine approximation still can be used with a slight modification of the given algorithm. Thus the diophantine approximation can also be applied when using a distributed TAST code.

Simulations have also been run with the Golden code using a diophantine approximation at the relay. We can see on figure \ref{wer_1relai} that in this case performance are slightly worse. This is explained by the non-optimal decoding at relays. However, this slight gap in performance (only 0.5 dB) is compensated by a much lower decoding complexity decreasing from $M$ to $\sqrt[4]{M}$.

\subsection{2-relay channel with the TAST code}\label{sect_ex_tast44}

For the 2-relay case, we propose to use the $4\times4$ TAST code in a distributed manner, associated to the Incomplete DF protocol. After recalling the structure of the TAST code, we will introduce two partial decoding methods, and so justify our choice of the TAST code.

TAST codes, introduced in \cite{tast}, are layered space-time codes. Here we use the $4\times4$ TAST code constructed using the cyclotomic field $\mathbb{K}=\mathbb{Q}(i,\theta)$, where $\theta=e^{i\frac{\pi}{8}}$, the generator of the Gallois group $\sigma:\theta\longmapsto i\theta$ and $\phi=e^{i\frac{\pi}{8}}$. The codeword is
\begin{displaymath}
\mathbf{X} = \left[ \begin{array}{cccc}
x_1 & x_2 & x_3 & x_4 \\
\phi\sigma(x_4) & \sigma(x_1) & \sigma(x_2) & \sigma(x_3) \\
\phi\sigma^2(x_3) & \phi\sigma^2(x_4) & \sigma^2(x_1) & \sigma^2(x_2) \\
\phi\sigma^3(x_2) & \phi\sigma^3(x_3) & \phi\sigma^3(x_4) & \sigma^3(x_1)
\end{array} \right],
\end{displaymath}
where, $\forall k\in\{1,\dots,4\}$, $x_k = s_{4*k-3} + \theta s_{4*k-2} + \theta^2 s_{4*k-1} + \theta^3 s_{4*k}$.

We propose to use this code in a 2-relay cooperative system as described in subsection \ref{sect_trans2}. The transmission scheme is schematized in figure~\ref{TAST_DF}.

Elements $x_1$, $x_2$, $x_3$ and $x_4\in\mathcal{O}_\mathbb{K}$ and their conjugates have to be recovered from the signals $y^{r_j}_1$ to $y^{r_j}_8$ received at the relay RS$_j$, $j\in\{1,\dots,2\}$.
We propose here two different methods for the partial decoding.

\paragraph{Exhaustive search (dimension 4)}

$x_k$, $k\in\{1,\dots,4\}$ and their conjugates $\sigma(x_n)$ can be decoded at relays by an exhaustive search as in the 1-relay case. The difference is that $x_n$ and $\sigma(x_n)$ cannot be decoded separately as they are conjugates.

Let's assume the $s_j$, $j\in\{1,\dots,16\}$, belong to a constellation $C$. We can define a new constellation $C_1$ to which the $x_k$ belong, and a corresponding constellation $C_2$ to which their conjugates $\sigma(x_k)$ belong.

Decoded versions of the $x_k$ and their conjugates $\sigma(x_k)$ are obtained by looking for the elements $x$ of constellation $C_1$ and $\sigma(x)$ of constellation $C_2$ minimizing the distance between $\sqrt{\rho}h_1 x$ and the received signal corresponding to $x_k$ and the distance between $\sqrt{\rho}h_1\sigma(x)$ and the received signal corresponding to $\sigma(x_k)$. We decide to minimize the sum of these two distances. For example:
\begin{displaymath}
\{\widetilde{x_1},\widetilde{\sigma(x_1)}\} 
= \arg\min_{x \in C_1, \sigma(x) \in C_2} 
    \left\lbrace \left|y^{r_j}_1-\sqrt{\rho}h_1 x\right|^2 + \left|y^{r_j}_6-\sqrt{\rho}h_1\sigma(x)\right|^2 \right\rbrace
\end{displaymath}

However, this exhaustive decoding can be quite complex if a high constellation size is considered. Indeed, if the information symbols $s_j$ belong to a $M$-QAM constellation ($M$ elements), then, the $x_k$ have to be decoded in a new constellation of $M^4$ elements. The complexity of decoding is of the order $M^4$.

Simulations have been run for the two-relay cooperative scheme with a $4\times4$ perfect code for spectral efficiencies of 2 and 4 bits per channel use.
Figure \ref{wer_2relais} represents the frame error rates of the SISO, NAF and new DF protocols as functions of the SNR. The same remarks than in the one-relay case can be done. The Incomplete DF and the NAF have nearly the same performance (nearly zero asymptotic gain), but due to selection the proposed DF protocol outperforms the NAF at low SNR.

\paragraph{Two steps exhaustive decoding (dimension 2)}

A slight modification can reduce this exhaustive decoding in a constellation of $M^4$ elements to two exhaustive decodings in a constellation of only $M^2$ elements.

We can notice that $x_1$ and its second conjugate $\sigma^2(x_1)$ can be rewritten in the form:
\begin{align*}
x_1 & = (s_1 + \theta^2 s_3) + \theta (s_2 + \theta^2 s_4) \\
\sigma^2(x_1) & = (s_1 + \theta^2 s_3) - \theta (s_2 + \theta^2 s_4)
\end{align*}
\begin{equation}
\left[ \begin{array}{c}
x_1 \\ \sigma^2(x_1)
\end{array}\right]
= \underbrace{\left[ \begin{array}{cc}
1 & \theta \\ 1 & -\theta
\end{array}\right]}_{\mathbf{M}}
\left[ \begin{array}{c}
z_1 \\ z_2
\end{array}\right],
\label{eq:method2}
\end{equation}
where $z_1 = (s_1 + \theta^2 s_3)$ and $z_2 = (s_2 + \theta^2 s_4)$ are elements of the ring of integers of the field $\mathbb{Q}(e^{i\frac{\pi}{4}})$ of dimension 2 over $\mathbb{Q}(i)$. As $\frac{1}{\sqrt{2}}\mathbf{M}$ is a rotation matrix, a simple multiplication by $\mathbf{M}^\dagger$ allows to obtain $z_1$ and $z_2$ from $x_1$ and $\sigma^2(x_1)$.

In order to take advantage of this property, the idea is that the source sends the first and third lines of the codeword matrix during the first phase of the transmission and the second and fourth lines during the second phase of the transmission.

The partial decoding at relays is then done in two steps. First we compute the matrix product
\begin{displaymath}
\left[ \begin{array}{c}
z'_1 \\ z'_2
\end{array}\right]
= \frac{1}{2} \mathbf{M}^\dagger
\left[ \begin{array}{c}
\frac{y^{r_k}_1}{\sqrt{\rho}h_1} \\ \frac{y^{r_k}_6}{\sqrt{\rho}h_1}
\end{array}\right].
\end{displaymath}
Then we decode elements $z_1$ and $z_2$ of the ring of integers of $\mathbb{Q}(e^{i\frac{\pi}{4}})$ in an exhaustive way as in the example of the subsection \ref{sect_ex_golden}. Finally $x_1$ and its conjugate $\sigma^2(x_1)$ can be easily deduced from equation (\ref{eq:method2}).

This second method allows to decrease considerably the complexity. Indeed, the exhaustive search is now performed in a constellation of $M^2$ elements instead of $M^4$, which is quite reasonable.

This second decoding method cannot be applied to $4\times4$ perfect codes whose structure do not have the same property. That is why we have chosen to use TAST code. The advantages of the diophantine approximation and this two-step decoding method could be combined. Ideally the Incomplete DF would be the least complex if used with a distributed STBC offering both a structure allowing the two-step decoding and $\theta\in\mathbb{R}$ for a simple diophantine approximation.

Simulations have also been run for the two-relay cooperative scheme with a $4\times4$ TAST code and the two-step decoding at the relays. One can see on figure \ref{wer_2relais} that TAST codes provide slightly worse performance than perfect codes. This is explained by the fact that perfect codes are NVD (non-vanishing-determinant), on the contrary of TAST codes. However, when we use two or more relays, the partial decoding of the information at relays induces more complexity, as the two-step decoding method described in subsection \ref{sect_ex_tast44} cannot be used. That is why the use of two-step decodable STBC such as the TAST codes is necessary.

\section{Conclusion}

In this paper, we define a new DF protocol using distributed space-time codes that provides both full diversity and full rate, as the best known AF protocols, and unlike the existing LTW and NBK DF protocols. This new protocol is based on an incomplete decoding of the signal at the relays. The received signals at relays are decoded as elements of the ring of integers of the considered number field without decoding the information symbols. Several decoding methods are proposed at relays: exhaustive search, diophantine approximation or a method based on the decoding decomposition in two steps according to the code structure. The two last methods allow a considerable decrease of complexity.

The diversity-multiplexing gain tradeoff is proved to be the same as the one of the NAF protocol which is the best known AF protocol. Besides outage probability and simulation results prove that the Incomplete DF gives slightly better performance than the NAF protocol in the high SNR regime, and selection provides an improvement  for low SNR.

In this study, we have only considered cooperative schemes with line-of-sight, but the use of DF protocols can be very important in a non-line-of-sight or multihop cooperative scheme. To highlight the advantages of this new DF protocol over an AF strategy, applications of the Incomplete DF to a multihop system will be investigated in future works.

\appendix
\section{Appendix}
\subsection{Proof of Theorem \ref{th:pout}}\label{proof:pout}

Two cases have to be distinguished: with or without cooperation.
When $N_u \geq 1$ source-relay links are not in outage ($N-N_u$ relays are in outage) , the Incomplete DF cooperation scheme with $N_u$ relays is used.

From equation (\ref{eq:Heq}) we can write
\begin{displaymath}
\det(\mathbf{I}+\rho \mathbf{H}\mathbf{H}^\dagger) = \prod_{i=1}^{N_u}\det(\mathbf{I}+\rho \mathbf{H_i}\mathbf{H{_i}}^\dagger)
\end{displaymath}
and using equation (\ref{eq:Heqi}) the outage probability can be written
\begin{equation}
\begin{array}{rl}
Pr_{\mathrm{out},N_u} & = Pr \left\lbrace \frac{1}{2N_u} \log\left( \prod_{i=1}^{N_u} \left(1 + \frac{\rho}{2} \left( 3|g_0|^2 + |g_i|^2 \right) + \frac{\rho^2}{2} |g_0|^4 \right) \right) < R \right\rbrace \\
& = Pr \left\lbrace \log\prod_{i=1}^{N_u} \left( 1 + \frac{\rho}{2} \left( 3|g_0|^2 + |g_i|^2 \right) + \frac{\rho^2}{2} |g_0|^4 \right) < 2N_uR \right\rbrace.
\end{array}
\label{eq:pout_k}
\end{equation}

As the outage events of the source-relay links are independent, the probability of having only the last $N-N_u$ relays in outage, is the product of the probabilities of each $N_u$ first source-relay links not being in outage, and each of the last $N-N_u$ source-relay links being in outage. These probabilities are given by expression (\ref{outage_event}). So the outage probability of the last $N-N_u$ relays only can be written
\begin{equation}
Pr_{O,N-N_u} = \prod_{i=1}^{N_u} Pr \left\lbrace \log\left( 1+\rho|h_i|^2 \right) > 2R \right\rbrace \prod_{i=N_u+1}^N Pr \left\lbrace \log\left( 1+\rho|h_i|^2 \right) < 2R \right\rbrace.
\label{eq:po_k}
\end{equation}

When all source-relay links are in outage, we use the non-cooperative scheme, whose outage probability is
\begin{equation}
Pr_{\mathrm{out},0} = Pr \left\lbrace  \log\left( 1 + \rho |g_0|^2\right) < R \right\rbrace
\label{eq:pout_0}
\end{equation}

As the outage events of all source-relay links are independent, the probability of this case is given by the product of the outage probabilities of each source-relay link.
\begin{equation}
Pr_{O,N} = \prod_{i=1}^N Pr \left\lbrace \log\left( 1+\rho|h_i|^2 \right) < 2R \right\rbrace
\label{eq:po_0}
\end{equation}

Finally, as there are $\binom{N}{N_u}$ possible combinations of $N_u$ relays in $N$, we can write in the general case
\begin{equation}
Pr_{\mathrm{out}} = \sum_{N_u=0}^N \binom{N}{N_u} Pr_{\mathrm{out},N_u}Pr_{O,N-N_u}.
\end{equation}

\subsection{Proof of Theorem \ref{th:dmt}}\label{proof:dmt}
\subsubsection{Preliminaries}

\begin{definition}
Let $g$ follow a Rayleigh distribution. The exponential order of $\frac{1}{|g|^2}$ is
\begin{displaymath}
u = -\lim_{\rho\rightarrow \infty} \frac{\log|g|^2}{\log\rho}.
\end{displaymath}
\end{definition}

We can note $|g|^2 \doteq \rho^{-u}$ where the notation $\doteq$ denotes an asymptotic behavior when $\rho\rightarrow \infty$.

\newtheorem{lemma}{Lemma}
\begin{lemma}
The probability density function of $u$ is
\begin{displaymath}
p_u = \lim_{\rho\rightarrow \infty} \log(\rho) \rho^{-u} \exp(-\rho^{-u}),
\end{displaymath}
which satisfies
\begin{displaymath}
p_u \doteq \left\lbrace \begin{array}{l} \rho^{-\infty},\text{ for }u<0 \\
					 \rho^{-u},\text{ for }u\geq0 \end{array} \right.
\end{displaymath}
\end{lemma}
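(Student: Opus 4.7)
The plan is to obtain $p_u$ by a direct change of variables starting from the standard fact that, for a Rayleigh-distributed $g$, the squared magnitude $|g|^2$ is exponentially distributed with density $f_{|g|^2}(x)=e^{-x}$ on $[0,\infty)$. Then I would transcribe the asymptotic behaviour using the exponential-equality convention $\doteq$ introduced just above the lemma.

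First I would set $u=-\log|g|^2/\log\rho$, which is well-defined for $\rho>1$ and takes values in $\mathbb{R}$ as $|g|^2$ ranges over $(0,\infty)$. Inverting gives $|g|^2=\rho^{-u}$, so the Jacobian is $|d|g|^2/du|=\log(\rho)\,\rho^{-u}$. Substituting into $f_{|g|^2}$ yields
\begin{displaymath}
p_u(u)=\log(\rho)\,\rho^{-u}\exp(-\rho^{-u}),
\end{displaymath}
which is the first displayed identity in the lemma. Strictly speaking this is $p_u$ for each finite $\rho$; the $\lim_{\rho\to\infty}$ in the statement should be read in the sense of the exponential order, which is exactly what the second display extracts.

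Next I would split on the sign of $u$ to derive the asymptotic expression. For $u\ge 0$, we have $\rho^{-u}\le 1$, hence $\exp(-\rho^{-u})$ is bounded between $e^{-1}$ and $1$, so its exponential order is $\rho^{0}$. Since $\log\rho\doteq\rho^{0}$ as well, the product simplifies to $p_u\doteq\rho^{-u}$. For $u<0$, on the other hand, $\rho^{-u}=\rho^{|u|}\to\infty$, and the factor $\exp(-\rho^{|u|})$ decays faster than any negative power of $\rho$; this is the content of ``$\rho^{-\infty}$'' in the statement, and it clearly dominates the polynomial prefactor $\log(\rho)\,\rho^{|u|}$.

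No step here is really an obstacle: the only point deserving care is the interpretation of the limit in the first display and of the symbol $\rho^{-\infty}$ in the second, both of which are standard shorthand from the Zheng–Tse framework. Once the change of variables is written down, the dichotomy $u\ge 0$ versus $u<0$ reduces the asymptotics to the elementary observation that $\exp(-\rho^{-u})$ saturates at a positive constant in the former regime and decays super-polynomially in the latter.
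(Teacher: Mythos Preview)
Your argument is correct and is exactly the standard derivation from the Zheng--Tse framework: change of variables from the exponential density of $|g|^2$ followed by the obvious case split on the sign of $u$. The paper itself does not prove this lemma; it is merely stated as a preliminary (the companion Lemma~2 is referred to \cite{dmt_mac}, and the present lemma is implicitly drawn from the same source), so your proposal supplies precisely the routine justification the paper omits.
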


\begin{lemma}
Let $\mathcal{O}$ be a certain set and $P_\mathcal{O} = Pr\left\lbrace (u_1,\dots,u_N)\in\mathcal{O}\right\rbrace$, then
\begin{displaymath}
P_\mathcal{O} \doteq \rho^{-d} \text{ with } d = \inf_{(u_1,\dots,u_N)\in\mathcal{O}^+} \sum_{j=1}^N u_j
\end{displaymath}
where $\mathcal{O}^+ = \mathcal{O} \cap \mathbb{R}^{N+}$
\end{lemma}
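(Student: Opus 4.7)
The plan is to write $P_\mathcal{O}$ as an integral against the joint density of $(u_1,\dots,u_N)$ and then apply a Laplace-type argument, exactly in the spirit of Zheng--Tse. Since the fading coefficients giving rise to the $u_j$ are independent Rayleigh, the $u_j$ are mutually independent and
\begin{displaymath}
P_\mathcal{O} \;=\; \int_{\mathcal{O}} \prod_{j=1}^{N} p_{u_j}(u_j)\, du_1\cdots du_N,
\end{displaymath}
where each $p_{u_j}$ is the density from Lemma 1.

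The first step is to reduce the domain of integration from $\mathcal{O}$ to $\mathcal{O}^+=\mathcal{O}\cap\mathbb{R}^{N+}$. By Lemma 1, on any component where $u_j<0$ the density satisfies $p_{u_j}\doteq\rho^{-\infty}$, i.e.\ it decays faster than any negative power of $\rho$. Splitting the integral according to the signs of the coordinates and bounding each non-positive component by this super-polynomial estimate shows that the contribution of $\mathcal{O}\setminus\mathbb{R}^{N+}$ is $\doteq\rho^{-\infty}$, hence negligible in the exponential-equivalence sense. So
\begin{displaymath}
P_\mathcal{O} \;\doteq\; \int_{\mathcal{O}^+} \prod_{j=1}^{N} p_{u_j}(u_j)\, du_1\cdots du_N
\;\doteq\; \int_{\mathcal{O}^+} \rho^{-\sum_{j=1}^N u_j}\, du_1\cdots du_N,
\end{displaymath}
the second step using Lemma 1 again on the positive orthant and independence to multiply the densities.

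The second step is the Laplace-principle estimate of this integral. For the upper bound, bound the integrand pointwise by $\rho^{-\inf_{\mathcal{O}^+}\sum u_j}$ on $\mathcal{O}^+$; only the polynomial volume factor from the integration remains, which is absorbed by $\doteq$. For the lower bound, pick any point $(u_1^*,\dots,u_N^*)\in\mathcal{O}^+$ achieving (or approaching) the infimum $d$, choose a small neighbourhood of it of fixed (i.e.\ $\rho$-independent) positive volume inside $\mathcal{O}^+$, and on that neighbourhood the integrand is at least $\rho^{-d-\varepsilon}$ for any $\varepsilon>0$; letting $\varepsilon\to 0$ yields $P_\mathcal{O}\,\dot\geq\,\rho^{-d}$. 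Combining the two inequalities gives $P_\mathcal{O}\doteq\rho^{-d}$ with $d=\inf_{(u_1,\dots,u_N)\in\mathcal{O}^+}\sum_j u_j$.

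The main obstacle is technical rather than conceptual: making the neighbourhood argument in the lower bound work when the infimum is attained on the boundary of $\mathcal{O}^+$ (e.g.\ on the axes $u_j=0$ or on the boundary of $\mathcal{O}$ itself). This is typically handled by passing to a slightly enlarged open set on which the infimum is approached by interior points, or equivalently by replacing $\mathcal{O}^+$ by its closure and using continuity of $\sum u_j$, so that the resulting $d$ is unchanged and the open-neighbourhood lower bound still applies.
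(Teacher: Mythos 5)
The paper does not actually prove this lemma itself --- it simply writes ``The proof is drawn in [Zheng--Tse]'' --- and your argument is precisely the standard Laplace-principle proof from that reference (reduce to the positive orthant using the $\rho^{-\infty}$ decay for $u_j<0$, then match upper and lower bounds on $\int_{\mathcal{O}^+}\rho^{-\sum_j u_j}\,du$), so it is correct and takes the same approach. The only points worth tightening are the upper bound when $\mathcal{O}^+$ is unbounded (the ``polynomial volume factor'' claim needs a split at $\sum_j u_j > d+K$, where the integrand decays fast enough to be negligible) and the regularity assumption on $\mathcal{O}$ needed for the open-neighbourhood lower bound, which you already flag and which holds for the outage sets actually used in the paper.
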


\begin{proof}
The proof is drawn in \cite{dmt_mac}.
\end{proof}

\subsubsection{Proof of the theorem}

The outage probability of the Incomplete DF is given in equation~(\ref{eq:pout}). In order to compute the DMT of this cooperative strategy, we have to study the asymptotic behavior of this expression when $\rho$ grows to infinity.

Let $u_0$, $u_n$ and $v_n$, $n\in\{1,\dots,N\}$ be the exponential orders of $\frac{1}{|g_0|^2}$, $\frac{1}{|g_n|^2}$ and $\frac{1}{|h_n|^2}$ respectively.

In the case of signals being correctly decoded at $N_u$ relays
\begin{displaymath}
Pr_{\mathrm{out},N_u} = Pr \left\lbrace \log\prod_{i=1}^{N_u} \left( 1 + \frac{\rho}{2} \left( 3|g_0|^2 + |g_i|^2 \right) + \frac{\rho^2}{2} |g_0|^4 \right) < 2N_uR \right\rbrace
\end{displaymath}
which asymptotically becomes
\begin{align*}
Pr_{\mathrm{out},N_u} & \doteq Pr \left\lbrace \sum_{i=1}^{N_u} \log\left( \rho^{1-v_0} + \rho^{1-v_i} + \rho^{2-2v_0} \right) < 2N_ur\log\rho \right\rbrace \\
& \doteq Pr \left\lbrace \sum_{i=1}^{N_u} \max\left(1-v_i,2-2v_0\right) < 2N_ur \right\rbrace \\
& \doteq \rho^{-d_{out,N_u}(r)}
\end{align*}

As $\sum_{i=1}^{N_u} (1-v_i) < 2N_ur$ gives $N_u(1-2r)< \sum_{i=1}^{N_u} v_i$ and $\sum_{i=1}^{N_u} (2-2v_0) < 2N_ur$ gives $1-r< v_0$, we obtain the diversity-multiplexing gain tradeoff
\begin{equation}
d_{out,N_u}(r)=\inf\left(v_0+\sum_{i=1}^{N_u} v_i\right)=(1-r)+N_u(1-2r)^+.
\end{equation}

This case occurs when $N-N_u$ of the source-relay links are in outage and the others are not, with the probability:
\begin{align*}
Pr_{O,N-N_u} & = \prod_{i=1}^{N_u} Pr \left\lbrace \log\left( 1+\rho|h_i|^2 \right) > 2R \right\rbrace \prod_{i=N_u+1}^N Pr \left\lbrace \log\left( 1+\rho|h_i|^2 \right) < 2R \right\rbrace \\
& \doteq \prod_{i=1}^{N_u} \left( 1- Pr \left\lbrace \log\rho^{1-u_i} < 2r\log\rho \right\rbrace \right) \prod_{i=N_u+1}^N Pr \left\lbrace \log\rho^{1-u_i} < 2r\log\rho \right\rbrace \\
& \doteq \prod_{i=1}^{N_u} \left( 1- Pr \left\lbrace 1-u_i < 2r \right\rbrace \right) \prod_{i=N_u+1}^N Pr \left\lbrace 1-u_i < 2r \right\rbrace \\
& \doteq \prod_{i=1}^{N_u} \left( 1 - \rho^{-(1-2r)}\right) \prod_{i=N_u+1}^N \rho^{-(1-2r)} \\
& \doteq 1 \times \rho^{-(N-N_u)(1-2r)}
\end{align*}
so the diversity-multiplexing gain tradeoff is
\begin{equation}
d_{O,N-N_u}(r)=(N-N_u)(1-2r)^+.
\end{equation}

In the case of all source-relay links being in outage:
\begin{align*}
Pr_{\mathrm{out},0} & = Pr \left\lbrace \log\left( 1 + \rho |g_0|^2\right) < R \right\rbrace \\
& \doteq Pr \left\lbrace \log\rho^{1-v_0} < r\log\rho \right\rbrace
\doteq Pr \left\lbrace 1-v_0 < r \right\rbrace \\
& \doteq \rho^{-d_{out,0}(r)}
\end{align*}
with the diversity-multiplexing gain tradeoff
\begin{equation}
d_{out,0}(r)=1-r.
\end{equation}

This case occurs with the probability:
\begin{align*}
Pr_{O,N} & = \prod_{i=1}^N Pr \left\lbrace \log\left( 1+\rho|h_i|^2 \right) < 2R \right\rbrace \\
& \doteq \prod_{i=1}^N Pr \left\lbrace \log\rho^{1-u_i} < 2r\log\rho \right\rbrace \\
& \doteq \prod_{i=1}^N Pr \left\lbrace 1-u_i < 2r \right\rbrace
\doteq \prod_{i=1}^N \rho^{-(1-2r)} \\
& \doteq \rho^{-N(1-2r)}
\end{align*}
so the diversity-multiplexing gain tradeoff is
\begin{equation}
d_{O,N}(r)=N(1-2r)^+.
\end{equation}

Finally we can write
\begin{align*}
Pr_{\mathrm{out}} & = \sum_{N_u=0}^N C_{N_u}^N Pr_{\mathrm{out},N_u}Pr_{O,N-N_u} \\
& \doteq \sum_{N_u=0}^N C_{N_u}^N \rho^{-d_{out,N_u}(r)} \rho^{-d_{O,N-N_u}(r)} \\
& \doteq \rho^{-\max_{N_u \in \{0,\dots,N\}}(d_{out,N_u}(r)+d_{O,N-N_u}(r))}
\end{align*}
so the total diversity-multiplexing gain tradeoff is
\begin{equation}
d(r)=\max_{N_u \in \{0,\dots,N\}}(d_{out,N_u}(r)+d_{O,N-N_u}(r)) = (1-r)+N(1-2r)^+.
\end{equation}

\subsection{Modified Cassels' algorithm for decoding Z-PAM}\label{alg:dioph_approx}

\begin{algorithm}[H]
\SetLine
\KwIn{$y,\theta,Z$}
\KwOut{$P,Q$}
$\beta = -(y + (Z+1)(1+\theta))/2$\;
$\alpha = -\theta$\;
$D_{min} = \infty$\;
$\eta_0 = \alpha; \eta_1 = -1; \zeta = -\beta$\;
$p_0 = 0; p_1 = 1; P_1 = 0$\;
$q_0 = 1; q_1 = 0; Q_1 = 0$\;
\end{algorithm}
\begin{algorithm}[H]
\SetLine
\While{$\eta_{n-1} \neq 0 \wedge \zeta_{n-1} \neq 0 \wedge Q_{n-1} \leq Z$}{
  $a_n = \lfloor -\frac{\eta_{n-2}}{\eta_{n-1}} \rfloor$\;
  $p_n = p_{n-2} + a_n p_{n-1}$\;
  $q_n = q_{n-2} + a_n q_{n-1}$\;
  $\eta_n = \eta_{n-2} + a_n \eta_{n-1}$\;
  \eIf{$Q_{n-1} \leq q_{n-1}$}{
    $b_n = \lfloor -\frac{\zeta_{n-1}-\eta_{n-2}}{\eta_{n-1}} \rfloor$\;
    $P_n = P_{n-1} + p_{n-2} + b_n p_{n-1}$\;
    $Q_n = Q_{n-1} + q_{n-2} + b_n q_{n-1}$\;
    $\zeta_n = \zeta_{n-1} + \eta_{n-2} + b_n \eta_{n-1}$\;
  }{
    $P_n = P_{n-1} - p_{n-1}$\;
    $Q_n = Q_{n-1} - q_{n-1}$\;
    $\zeta_n = \zeta_{n-1} - \eta_{n-1}$\;
  }
  $P' = 2 P_n - (Z+1)$\;
  $Q' = 2 Q_n - (Z+1)$\;
  $D_{curr} = (y - P' - \theta Q')^2$\;
  \If{$D_{curr} \leq D_{min}$}{
    $P = P'$\;
    $Q = Q'$\;
    $D_{min} = D_{curr}$\;
  }
  $n = n+1$\;
}
\end{algorithm}

\bibliographystyle{IEEEtran}

\newpage

\begin{figure}[h!t]
\centering
\includegraphics[width=.3\linewidth]{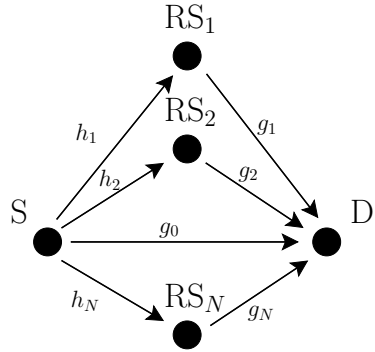}
\caption{System model : relay channel with one source, N relays and one destination} \label{model}
\end{figure}

\begin{figure}[h!t]
\centering
\includegraphics[width=\linewidth,clip]{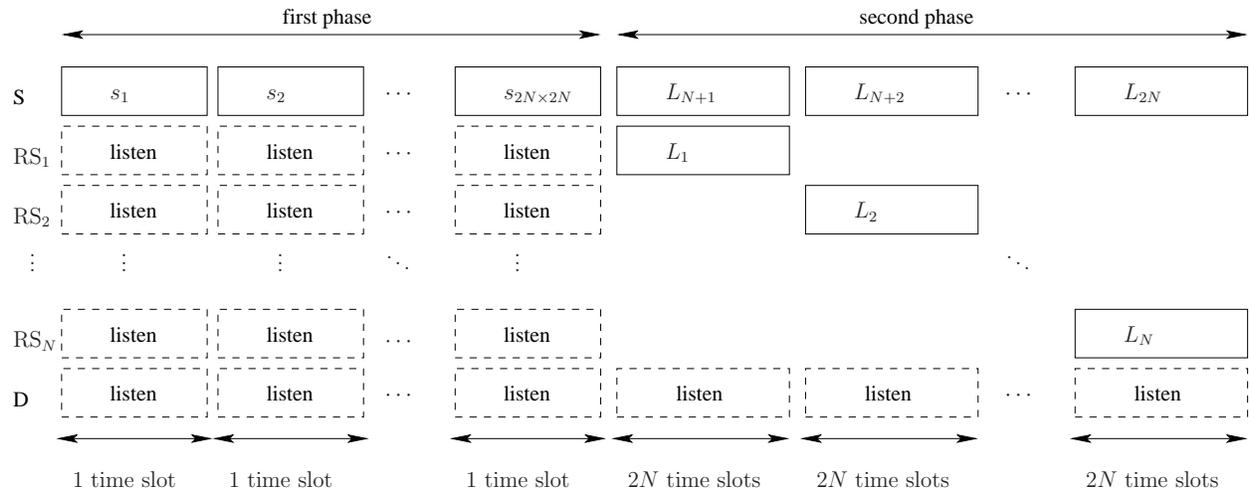}
\caption{Transmission frame of the Asymmetric DF protocol} \label{TRANS_SCHEME}
\end{figure}

\begin{figure}[h!t]
\centering
\includegraphics[width=.7\linewidth,clip]{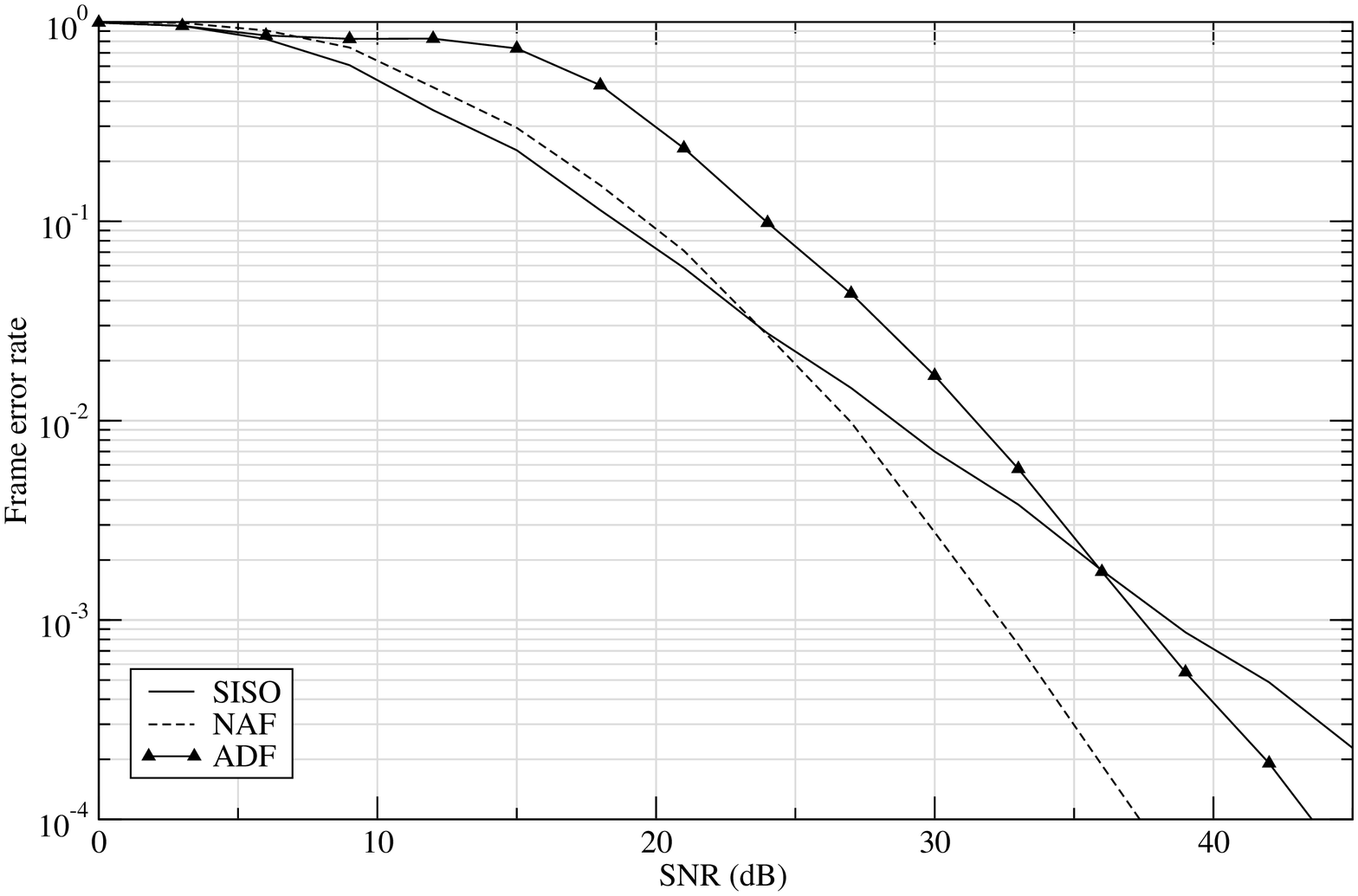}
\caption{Frame error rate of the SISO, NAF and Asymmetric DF protocols (both implemented with the Golden code) for 1 relay, 4 bits pcu} \label{wer_as_df}
\end{figure}

\begin{figure}[h!t]
\centering
\includegraphics[width=\linewidth,clip]{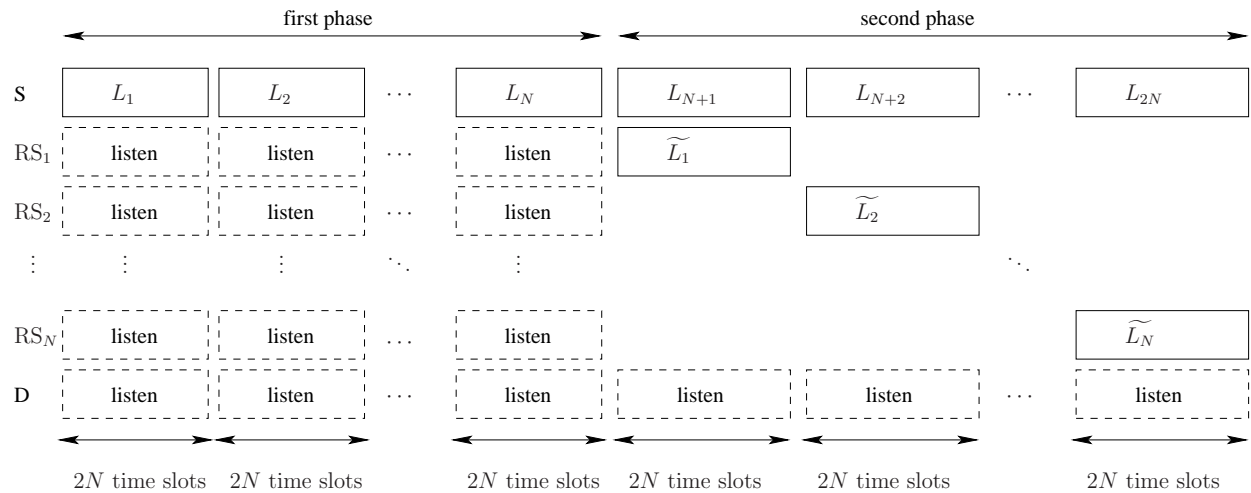}
\caption{Transmission frame of the Incomplete DF protocol} \label{TRANS_SCHEME2}
\end{figure}

\begin{figure}[h!t]
\centering
\includegraphics[width=.7\linewidth,clip]{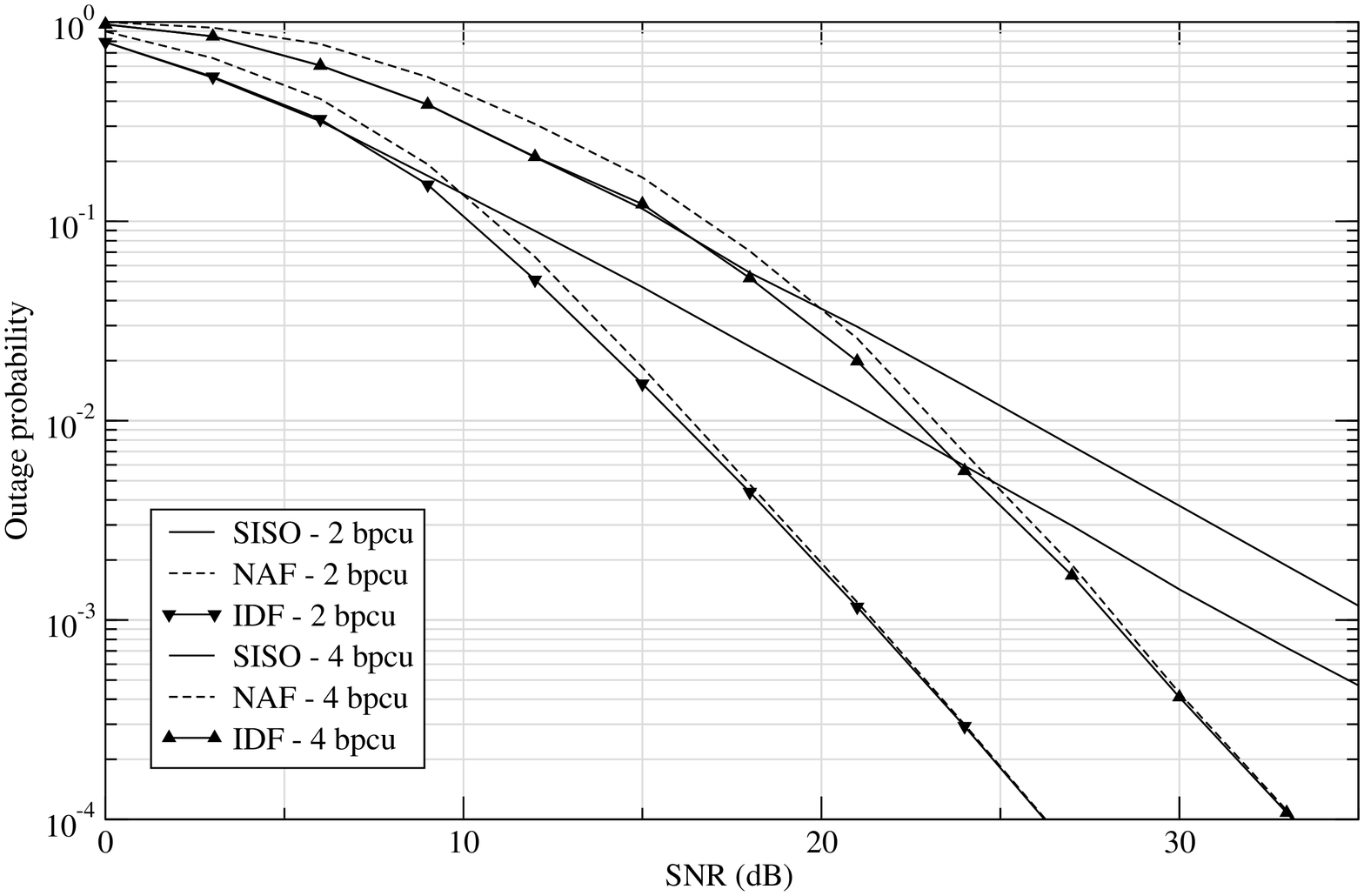}
\caption{Outage probabilities of the SISO, NAF and new DF protocols as functions of the SNR at spectral efficiencies of 2 and 4 bits per channel use in the 1-relay case} \label{pout_1relai}
\end{figure}

\begin{figure}[h!t]
\centering
\includegraphics[width=.7\linewidth,clip]{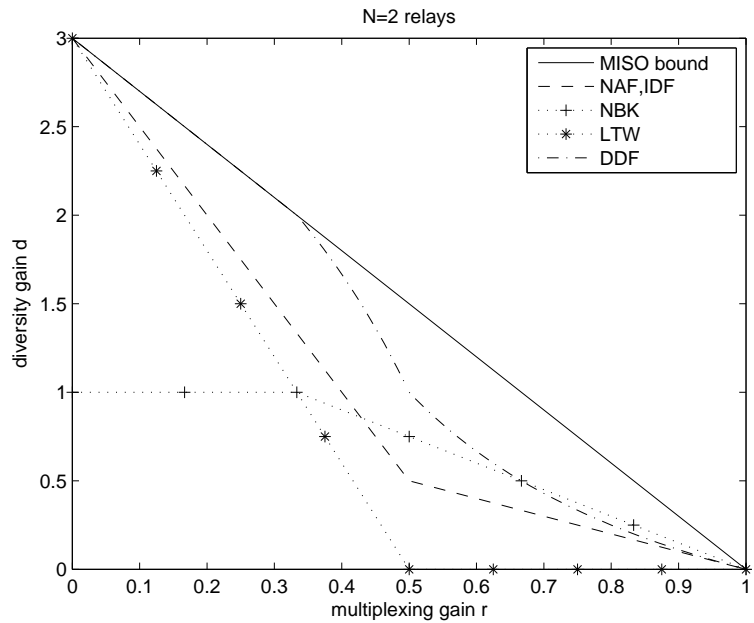}
\caption{DMT of several cooperation protocols} \label{dmt}
\end{figure}

\begin{figure}[h!t]
\centering
\includegraphics[width=.6\linewidth,clip]{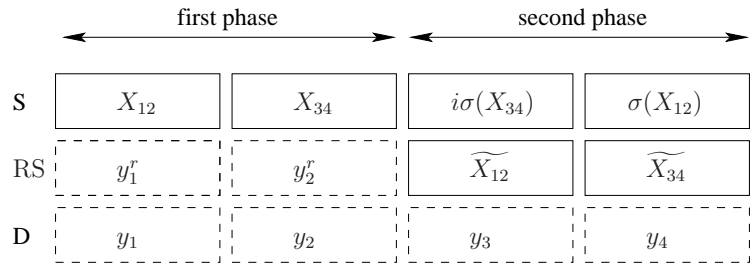}
\caption{Transmission frame of the Incomplete DF protocol in the 1-relay case implemented with a distributed Golden code} \label{GOLDEN_DF}
\end{figure}

\begin{figure}[h!t]
\centering
\subfigure[Constellation $C$]{\includegraphics[width=.5\linewidth]{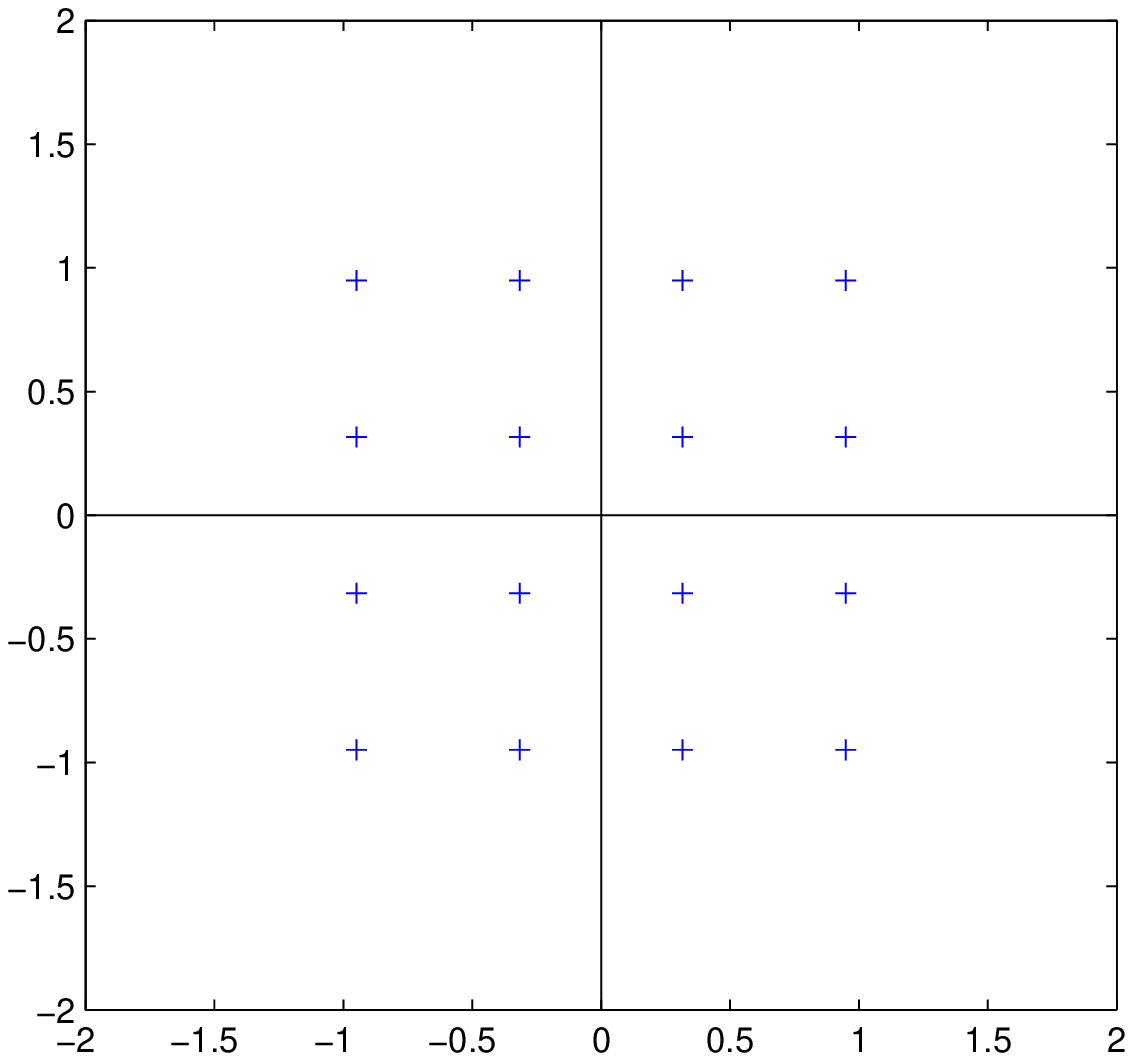}}
\hfil
\subfigure[Constellation $C'$]{\includegraphics[width=.47\linewidth]{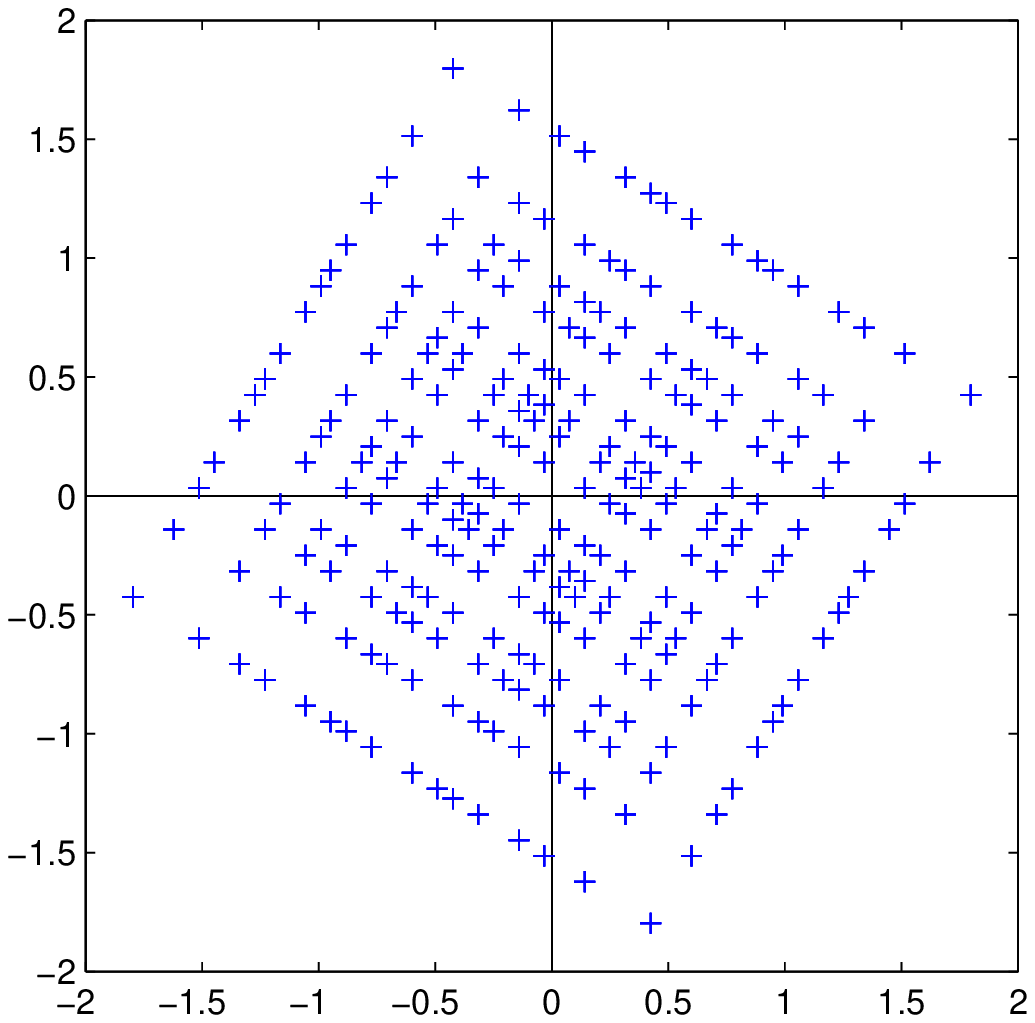}}
\caption{16-QAM and Golden constellation}\label{const}
\end{figure}

\begin{figure}[h!t]
\centering
\includegraphics[width=.7\linewidth,clip]{figures/wer_1relai.eps}
\caption{Frame error rates of the SISO, NAF and new DF protocols as functions of the SNR at spectral efficiencies of 2 and 4 bits per channel use in the 1-relay case} \label{wer_1relai}
\end{figure}

\begin{figure}[h!t]
\centering
\includegraphics[width=\linewidth,clip]{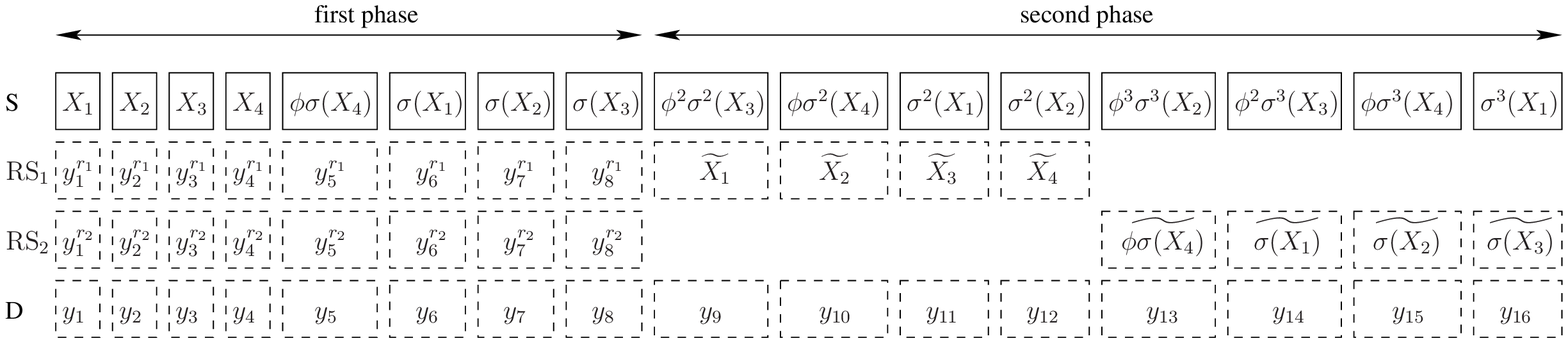}
\caption{Transmission frame of the Incomplete DF protocol in the 2-relay case implemented with a distributed $4\times4$ TAST code} \label{TAST_DF}
\end{figure}

\begin{figure}[h!t]
\centering
\includegraphics[width=.7\linewidth,clip]{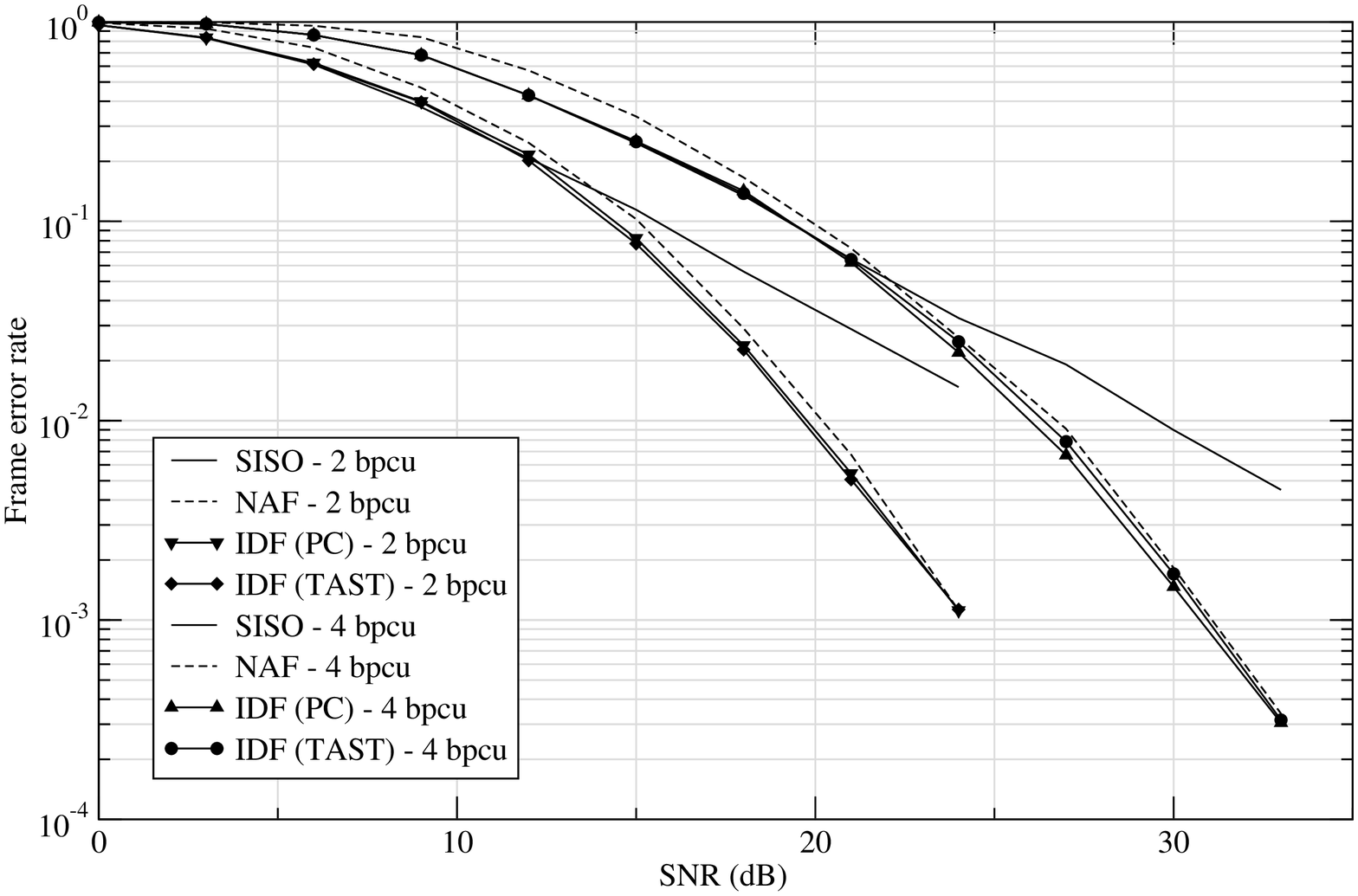}
\caption{Frame error rates of the SISO, NAF and new DF protocols as functions of the SNR at spectral efficiencies of 2 and 4 bits per channel use in the 2-relay case} \label{wer_2relais}
\end{figure}

\end{document}